\let\ssection=\section
\renewcommand{\section}{\setcounter{equation}{0}\ssection}
\chardef\s=110 \chardef\g=103
\newtheorem{thm}{Theorem}[section]
\newtheorem{lem}[thm]{Lemma}
\newtheorem{prop}[thm]{Proposition}
\newtheorem{rmk}[thm]{Remark}
\def\g{\gamma}
\def\s{\sigma}
\begin{document}

\title{Projectively Equivariant Quantization and Symbol calculus in dimension $1|2$ }
\author{N. Mellouli \thanks{
Universit\'e Lyon 1, Institut Camille Jordan, CNRS, UMR5208, Villeurbanne,
F-69622, France; mellouli@math.univ-lyon1.fr}}
\date{23/06/2011}
\maketitle

\begin{abstract}
The spaces of higher-order differential operators (in Dimension $1|2$),
which are modules over the stringy Lie superalgebra $\mathcal{K}(2) $, are
isomorphic to the corresponding spaces of symbols as orthosymplectic modules
in non resonant cases. Such an $\mathrm{osp}\left( 2|2\right) $-equivariant
quantization, which has been given in second-order differential operators
case, keeps existing and unique. We calculate its explicit formula that
provides extension in particular order cases.
\end{abstract}

\thispagestyle{empty}

\section{Introduction and the main results}

Let $S^{1|2}$ be a supermanifold which is endowed with a projective
structure (Susy-structure), see \cite{NV}, in dimension $1|2$ and $\mathcal{D%
}_{\lambda ,\mu }\left( S^{1|2}\right) $ the space of differential operators
on $S^{1|2}$ acting from the space of $\lambda $-densities to the space of $%
\mu $-densities where $\lambda $ and~$\mu $ are real or complex numbers. The
space $\mathcal{D}_{\lambda ,\mu }\left( S^{1|2}\right) $ which is a module
over the stringy superalgebra $\mathcal{K}(2)$, see \cite{GLS1}, is
naturally filtrated and has an other finer filtration by the contact order
of differential operators. The space of symbols $\mathcal{S}\left(
S^{1|2}\right) $, that is the graded module $\mathrm{gr}\mathcal{D}_{\lambda
\mu }\left( S^{1|2}\right) $, isn't isomorphic to the space $\mathcal{D}%
_{\lambda ,\mu }\left( S^{1|2}\right) $ as $\mathcal{K}(2)$-module.
Therefore, we have restricted the module structure on $\mathcal{D}_{\lambda
\mu }\left( S^{1|2}\right) $ to the orthosymplectic Lie superalgebra $%
\mathrm{osp}\left( 2|2\right) $ that is naturally embedded into $\mathcal{K}%
(2)$. We establish a canonical isomorphism between the space of differential
operators on $S^{1|2}$ and the corresponding space of symbols. An explicit
expression of projectively equivariant quantization map is given in case of
second order differential operetors, see \cite{NV}. We extand calculus to
symbols of higher order differential operators.

\noindent

\section{Geometry of the supercircle $S^{1|2}$}

We have considered the supercircle $S^{1|2}$ described in \cite{NV} by its
graded commutative algebra of complex-valued functions $C^{\infty }\left(
S^{1|2}\right) $, consisting of the following elements : 
\begin{equation}
f\left( x,\xi _{1},\xi _{2}\right) =\ f_{0}(x)+\xi _{1}\,f_{1}(x)+\xi
_{2}\,f_{2}(x)+\xi _{1}\xi _{2}\,f_{12}(x),  \label{ResValSh}
\end{equation}
where $x$ is the Fourier image of the angle parameter on $S^{1}$, $\xi
_{1},\xi _{2}$ are odd Grassmann coordinates and $f_{0},f_{12},f_{1},f_{2}%
\in {}C^{\infty }(S)$ are functions with complex values. We have defined the
parity function~$p$ by setting $p\left( x\right) =0$ and $p\left( \xi
_{i}\right) =1$.

The \textit{standard contact} structure on $S^{1|2}$, known as \textit{Susy-}%
structure\textit{,} is defined by the data of a linear distribution $%
\left\langle \overline{D}_{1},\overline{D}_{2}\right\rangle $ on $S^{1|2}$
generated by the odd vector fields : 
\begin{equation}
\overline{D}_{1}=\partial _{\xi _{1}}-\xi _{1}\partial _{x},\qquad \overline{%
D}_{2}=\partial _{\xi _{2}}-\xi _{2}\partial _{x}.  \label{ConVF}
\end{equation}
We would rather recall that every contact vector field can be expressed, for
some function $f\in C^{\infty }\left( S^{1|2}\right) $, by 
\begin{equation}
X_{f}=f\partial _{x}-\left( -1\right) ^{p\left( f\right) }\frac{1}{2}\left( 
\overline{D}_{1}\left( f\right) \overline{D}_{1}\ +\overline{D}_{2}\left(
f\right) \overline{D}_{2}\right)  \label{ContVF}
\end{equation}
The \textit{projective (conformal)} structure on the supercircle $S^{1|2}$,
see \cite{OOC}, is defined by the action of Lie superalgebra $\mathrm{osp}%
\left( 2|2\right) $. The \textit{orthosymplectic} algebra $\mathrm{osp}%
\left( 2|2\right) $ is spanned by the contact vector fields $X_{f}$ with the
contact Hamiltonians $f$ which are elements of $\left\{ 1,\xi _{1},\xi
_{2},x,\xi _{1}\xi _{2},x\xi _{1},x\xi _{2},x^{2}\right\} $. The embedding
of $\mathrm{osp}\left( 2|2\right) $ into $\mathcal{K}\left( 2\right) $ is
given by (\ref{ContVF}). The subalgebra $\mathrm{Aff}\left( 2|2\right) $ of $%
\mathrm{osp}\left( 2|2\right) $, called the \textit{Affine} Lie
superalgebra, is spanned by the contact vector fields $X_{f}$ with the
contact Hamiltonians $f$ which are elements of $\left\{ 1,\xi _{1},\xi
_{2},x,\xi _{1}\xi _{2}\right\} $.

For any contact vector field, we have defined a family of \ differential
operators of order one on $C^{\infty }\left( S^{1|2}\right) :$ 
\begin{equation}
L_{X_{f}}^{\lambda }:=X_{f}+\lambda f^{\prime },
\end{equation}
where the parameter $\lambda $ is an arbitrary complex number. Thus, we have
obtained a family of $\mathcal{K}\left( 2\right) $-modules on $C^{\infty
}\left( S^{1|2}\right) $ noted by $\mathcal{F}_{\lambda }\left(
S^{1|2}\right) $ which are called the spaces of \textit{\ weighted densities}
of weight $\lambda $.

\section{Differential operators on the spaces of weighted densities}

In this section, we have introduced the space of differential operators
acting on the spaces of weighted densities. We have also presented the
corresponding space of symbols on $S^{1|2}$. Those are detailed in \cite
{LKV,GLS,GMO,Con,NV}.

For every integer or half-integer $k$, the space of differential operators
of the form 
\begin{equation}
A=\sum_{\ell +\frac{m}{2}+\frac{n}{2}\leq {}k}a_{\ell ,m,n}\partial
_{x}^{\ell }\,\overline{D}_{1}^{m}\,\overline{D}_{2}^{n},
\label{GeneralformulaOpDiff}
\end{equation}
where $a_{\ell ,m,n}\in C^{\infty }\left( S^{1|2}\right) $ and $m,n\leq 1$,
has been noted by $\mathcal{D}_{\lambda \mu }^{k}\left( S^{1|2}\right) $.
This above $\mathcal{K}(2)$-module space has a $\mathcal{K}(2)$-invariant 
\textit{finer filtration} : 
\begin{equation}
\mathcal{D}_{\lambda \mu }^{0}\left( S^{1|2}\right) \subset \mathcal{D}%
_{\lambda \mu }^{\frac{1}{2}}\left( S^{1|2}\right) \subset \mathcal{D}%
_{\lambda \mu }^{1}\left( S^{1|2}\right) \subset ...\subset \mathcal{D}%
_{\lambda \mu }^{k}\left( S^{1|2}\right) \subset \mathcal{D}_{\lambda \mu
}^{k+\frac{1}{2}}\left( S^{1|2}\right) \subset \cdots  \label{finefiltration}
\end{equation}
that has been considered in papers \cite{NV,GMO}.

We would rather remind that the orthosymplectic superalgebra $\mathrm{osp}%
\left( 2|2\right) $ has been acting on $\mathcal{D}_{\lambda \mu }^{k}\left(
S^{1|2}\right) $. The action of contact field $X_{f}$ of order one on a
differential operator $A$, see formula (\ref{ContVF}), is given by 
\begin{equation}
\pounds _{X_{f}}^{\lambda \mu }\left( A\right) =L_{X_{f}}^{\mu }\circ
A-\left( -1\right) ^{p\left( f\right) p\left( A\right) }A\circ
L_{X_{f}}^{\lambda }.  \label{GenLD}
\end{equation}

\subsection{Space of symbols of differential operators}

The graded $\mathcal{K}\left( 2\right) $-module, which is associated with
the finer filtration (\ref{finefiltration}) and called the \textit{space of
symbols} of differential operators, is defined by 
\begin{equation}
\mathrm{gr}\mathcal{D}_{\lambda \mu }\left( S^{1|2}\right)
=\bigoplus_{i=0}^{\infty }\mathrm{gr}^{\frac{i}{2}}\mathcal{D}_{\lambda \mu
}\left( S^{1|2}\right) ,  \label{DefSymb}
\end{equation}
where $\mathrm{gr}^{k}\,\mathcal{D}_{\lambda \mu }\left( S^{1|2}\right) =%
\mathcal{D}_{\lambda \mu }^{k}\left( S^{1|2}\right) /\mathcal{D}_{\lambda
\mu }^{k-\frac{1}{2}}\left( S^{1|2}\right) $ for every integer or
half-integer $k.$

The image of any differential operator through the natural projection 
\[
\sigma _{pr}:\mathcal{D}_{\lambda \mu }^{k}\left( S^{1|2}\right) \rightarrow 
\mathrm{gr}^{k}\,\mathcal{D}_{\lambda \mu }\left( S^{1|2}\right) , 
\]
that is defined by the filtration (\ref{finefiltration}), has been called
the \textit{principal symbol}.

Referring to \cite{NV}, the \ stringy superalgebra $\mathcal{K}\left(
2\right) $ keeps acting on the space of symbols.

\begin{prop}
\label{SymP} If $k$ is an integer, then 
\begin{equation}
\mathrm{gr}^k\mathcal{D}_{\lambda \mu}\left( S^{1|2}\right) =\mathcal{F}
_{\mu -\lambda -k} \bigoplus \mathcal{F}_{\mu -\lambda -k}
\end{equation}
\end{prop}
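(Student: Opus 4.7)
\medskip

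\noindent\textbf{Proof plan.} The first step is to describe $\mathrm{gr}^k\mathcal{D}_{\lambda\mu}(S^{1|2})$ as a $C^\infty(S^{1|2})$-module by isolating the leading monomials at contact order $k$. Using the canonical form (\ref{GeneralformulaOpDiff}), the monomials $\partial_x^\ell\bar D_1^m\bar D_2^n$ contributing to the symbol are those with $\ell+\frac{m+n}{2}=k$ and $m,n\in\{0,1\}$. For \emph{integer} $k$ the parity of $m+n$ must be even, so the only admissible pairs are $(m,n)=(0,0)$ and $(m,n)=(1,1)$, forcing $\ell=k$ and $\ell=k-1$ respectively. Thus every class in $\mathrm{gr}^k\mathcal{D}_{\lambda\mu}$ admits a unique representative of the form
\begin{equation*}
A \;=\; a\,\partial_x^k \;+\; b\,\partial_x^{k-1}\overline{D}_1\overline{D}_2, \qquad a,b\in C^\infty(S^{1|2}),
\end{equation*}
and the assignment $[A]\mapsto(a,b)$ is a linear bijection onto $C^\infty(S^{1|2})\oplus C^\infty(S^{1|2})$.

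The second step is to turn this linear isomorphism into an isomorphism of $\mathcal{K}(2)$-modules with target $\mathcal{F}_{\mu-\lambda-k}\oplus\mathcal{F}_{\mu-\lambda-k}$. One has to verify that, modulo $\mathcal{D}_{\lambda\mu}^{k-1/2}$,
\begin{equation*}
\pounds_{X_f}^{\lambda\mu}(A) \;\equiv\; L^{\mu-\lambda-k}_{X_f}(a)\,\partial_x^k \;+\; L^{\mu-\lambda-k}_{X_f}(b)\,\partial_x^{k-1}\overline{D}_1\overline{D}_2.
\end{equation*}
I would do this by expanding (\ref{GenLD}) and systematically throwing away every commutator that lowers the contact order. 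Because $X_f$ is a contact vector field, $[X_f,\overline{D}_i]$ lies in the module generated by $\overline{D}_1,\overline{D}_2$ with coefficients of contact order $\leq 1/2$, so each such commutator strictly drops the order of the monomial it acts on. Consequently, all non-trivial contributions at level $k$ come from (i) the action of $X_f$ on the coefficient $a$ or $b$, and (ii) the leading term produced when $L^\mu_{X_f}$ or $L^\lambda_{X_f}$ is commuted past $\partial_x^k$ or $\partial_x^{k-1}\overline{D}_1\overline{D}_2$.

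The third step, and the computational heart of the argument, is to show that (i) and (ii) combine to give exactly the density action $X_f+(\mu-\lambda-k)f'$ on both $a$ and $b$. The weight $\mu$ enters through $L^\mu_{X_f}\circ A$, the weight $-\lambda$ through the right-composition $A\circ L^\lambda_{X_f}$, while the $-kf'$ contribution arises from commuting the multiplication operator $\mu f'$ (and its $-\lambda f'$ analogue) past the $k$ factors of $\partial_x$ and $\overline{D}_i$; equivalently, one observes that $[X_f,\partial_x^\ell \overline{D}_1^m\overline{D}_2^n]$ produces the leading term $-(\ell+\frac{m+n}{2})f'\,\partial_x^\ell\overline{D}_1^m\overline{D}_2^n$ modulo operators of strictly lower contact order. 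This is the main obstacle: one must handle the half-integer contact weights of $\overline{D}_1,\overline{D}_2$ correctly and check that the two leading monomials $\partial_x^k$ and $\partial_x^{k-1}\overline{D}_1\overline{D}_2$ do not mix, i.e.\ no cross-terms of the form $c\,\partial_x^{k-1}\overline{D}_1\overline{D}_2$ appear from the action on $a\partial_x^k$ at the top level, nor vice versa. Both the non-mixing and the weight computation follow from the explicit formula (\ref{ContVF}), which shows that commuting $X_f$ past $\overline{D}_1\overline{D}_2$ preserves the product structure of the odd generators up to lower contact order, so the two summands are stable and carry identical density weights $\mu-\lambda-k$, as claimed.
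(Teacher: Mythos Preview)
Your proposal is correct and follows essentially the same route as the paper: identify the two leading monomials $\partial_x^{k}$ and $\partial_x^{k-1}\overline{D}_1\overline{D}_2$ for integer $k$, then compute the induced $\mathcal{K}(2)$-action on the pair of coefficients via (\ref{GenLD}) and observe that each transforms as a $(\mu-\lambda-k)$-density. The paper's proof is much terser --- it simply asserts ``we easily calculate'' the action --- whereas you spell out the mechanism (the weight $-k$ arising from $[X_f,\partial_x^{\ell}\overline{D}_1^{m}\overline{D}_2^{n}]$, the $\mu$ and $-\lambda$ from left/right composition) and correctly flag the non-mixing of the two summands as the point that actually requires checking; this is exactly what distinguishes the integer case from the half-integer case treated immediately after in (\ref{Sact}).
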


\begin{proof}
By definition (see formula (\ref{GeneralformulaOpDiff})), a given operator $%
A\in \mathcal{D}_{\lambda \mu }^{k}\left( S^{1|2}\right) $ with integer $k$
is of the form 
\[
A=F_{1}\,\partial _{x}^{k}+F_{2}\,\partial _{x}^{k-1}\overline{D}_{1}%
\overline{D}_{2}+\cdots , 
\]
where $\cdots $ stand for lower order terms. The principal symbol of $A$ is
then encoded by the pair $(F_{1},F_{2})$. From (\ref{GenLD}), we easily
calculate the $\mathcal{K}(2)$-action on the principal symbol: 
\[
L_{X_{f}}\left( F_{1},F_{2}\right) =\left( L_{X_{f}}^{\mu -\lambda -k}\left(
F_{1}\right) ,L_{X_{f}}^{\mu -\lambda -k}\left( F_{2}\right) .\right) 
\]
In other words, both $F_{1}$ and $F_{2}$ transform as $(\mu -\lambda -k)$%
-densities. 
\end{proof}

The situation is more complicated for half-integer $k$ : the $\mathcal{K}%
\left( 2\right) $-action has been given by 
\begin{equation}
L_{X_{f}}(F_{1},F_{2})=(L_{X_{f}}^{\mu -\lambda -k}\left( F_{1}\right) -%
\frac{1}{2}\,\overline{D}_{1}\,\overline{D}_{2}\left( f\right)
F_{2},\;L_{X_{f}}^{\mu -\lambda -k}\left( F_{2}\right) +\frac{1}{2}\overline{%
D}_{1}\,\overline{D}_{2}\left( f\right) F_{1}).  \label{Sact}
\end{equation}
Therefore, the spaces of symbols of half-integer contact order aren't
isomorphic to the spaces of weighted densities.

\noindent Simplifying the notation as in \cite{LO,DLO,GMO,NV}, the whole
space of symbols $\mathrm{gr}\mathcal{D}_{\lambda \mu }\left( S^{1|2}\right) 
$, depending only on $\mu -\lambda $, has been noted by $\mathcal{S}_{\mu
-\lambda }\left( S^{1|2}\right) $, and the space of symbols of contact order 
$k$ has been noted by $\mathcal{S}_{\mu -\lambda }^{k}\left( S^{1|2}\right).$

A linear map, $Q:\mathcal{S}_{\mu -\lambda }\left( S^{1|2}\right)
\rightarrow \mathcal{D}_{\lambda \mu }\left( S^{1|2}\right) $, is called a 
\textit{quantization map} if it verifies bijectivety and preserves the
principal symbol of every differential operator.

\section{Projectively equivariant quantization on $S^{1|2}$}

The main result of this paper is the existence and uniqueness of an $\mathrm{%
osp}\left( 2|2\right) $-equivariant quantization map in Dimension $1|2$. We
calculate its explicit formula.

For every $m$ integer or half-integer, the space $\mathcal{D}_{\lambda \mu
}^{m}\left( S^{1|2}\right) $ is isomorphic to the corresponding space of
symbols as an $\mathrm{Aff}\left( 2|2\right) $-module. We will show how to
extend this isomorphism to that of the $\mathrm{osp}\left( 2|2\right) $%
-modules.

\subsection{The Divergence operators as Affine equivariant}

Let us introduce new differential operators, which are called Divergence
operators, on the space of symbols $\mathcal{S}_{\mu -\lambda }\left(
S^{1|2}\right) $.

At first, we consider the case of differential operators of contact order $k$%
, where $k$\ \ is an integer. We have assumed that the symbols of
differential operators are homogeneous and we have defined parity of the non
vanished symbol $\left( F_{1},F_{2}\right) $ as $p\left( F\right) :=p\left(
F_{1}\right) =p\left( F_{2}\right) $.

\subsubsection{The Divergence operators in case of integer contact order $k$}

In this case, we define the Divergence as Affine equivariant differential
operators on the space of symbols $\mathcal{S}_{\mu -\lambda }\left(
S^{1|2}\right) $. In each component $\mathcal{S}_{\mu -\lambda }^{k}\left(
S^{1|2}\right) $, we have 
\begin{equation}
DIV^{2n+1}\left( F_{1},F_{2}\right) =\left( -1\right) ^{p\left( F\right)
+1}\left( 
\begin{array}{c}
\frac{k+2\lambda }{k}\partial _{x}^{n}\overline{D}_{2}\left( F_{2}\right)
+\partial _{x}^{n}\overline{D}_{1}\left( F_{1}\right) \\ 
\partial _{x}^{n}\overline{D}_{2}\left( F_{1}\right) -\frac{k+2\lambda }{k}%
\partial _{x}^{n}\overline{D}_{1}\left( F_{2}\right)
\end{array}
\right) ,  \label{DivergenceEven1}
\end{equation}
\begin{equation}
DIV^{2n}\left( F_{1},F_{2}\right) =\left( 
\begin{array}{c}
\partial _{x}^{n}\left( F_{1}\right) -\frac{\left( k+2\lambda \right) n}{%
k\left( 2\left( \mu -\lambda \right) +n-2k\right) }\partial _{x}^{n-1}%
\overline{D}_{1}\overline{D}_{2}\left( F_{2}\right) \\ 
\frac{\left( k+2\lambda \right) \left( k-n\right) }{k\left( k-n+2\lambda
\right) }\partial _{x}^{n}\left( F_{2}\right) +\frac{n\left( k-n\right) }{%
\left( 2\left( \mu -\lambda \right) +n-2k\right) \left( k-n+2\lambda \right) 
}\partial _{x}^{n-1}\overline{D}_{1}\overline{D}_{2}\left( F_{1}\right)
\end{array}
\right)  \label{DivergenceEven2}
\end{equation}
and 
\begin{eqnarray*}
div^{2k-\left( 2n+1\right) } &=&\left( 
\begin{array}{cc}
\partial _{x}^{k-n-1}\overline{D}_{1}, & \partial _{x}^{k-n-1}\overline{D}%
_{2}
\end{array}
\right) , \\
div^{2k-\left( 2n\right) } &=&\left( 
\begin{array}{cc}
\partial _{x}^{k-n}, & \partial _{x}^{k-n-1}\overline{D}_{1}\overline{D}_{2}
\end{array}
\right) .
\end{eqnarray*}

\begin{lem}
The Divergence operators (\ref{DivergenceEven1}) and (\ref{DivergenceEven2})
commute with the $\mathrm{Aff}\left( 2|2\right) $-action.
\end{lem}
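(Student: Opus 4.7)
The plan is to reduce the equivariance check to the five generators of $\mathrm{Aff}(2|2)$, namely the contact vector fields $X_f$ with $f \in \{1,\xi_1,\xi_2,x,\xi_1\xi_2\}$, and for each of these to verify that
\[
L_{X_f} \circ DIV \;=\; DIV \circ L_{X_f}
\]
on $\mathcal{S}_{\mu-\lambda}^{k}$, where the source action is that of Proposition~\ref{SymP} while the target action is again of Proposition~\ref{SymP} type when $DIV = DIV^{2n}$ (the target $\mathcal{S}^{k-n}_{\mu-\lambda}$ has integer contact order) but the twisted action of~(\ref{Sact}) when $DIV = DIV^{2n+1}$ (the target $\mathcal{S}^{k-n-1/2}_{\mu-\lambda}$ has half-integer order). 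The lemma is thus a case-by-case computation on these five Hamiltonians, with the subtlety that for odd-subscript $DIV$ one must track the off-diagonal twist proportional to $\overline{D}_1\overline{D}_2(f)$.

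The three easy generators are $f = 1,\xi_1,\xi_2$: $X_1 = \partial_x$ acts as $\partial_x$ on every weighted-density space and commutes with each of $\partial_x,\overline{D}_1,\overline{D}_2$, so commutation is immediate; for $f=\xi_i$ the computation collapses to the SUSY relations $\overline{D}_i\overline{D}_j+\overline{D}_j\overline{D}_i = -2\delta_{ij}\partial_x$ together with the standard intertwining property of $\overline{D}_i$ between density spaces of consecutive weights. Moreover $\overline{D}_1\overline{D}_2(f) = 0$ for each of $1,\xi_1,\xi_2$, so no twist from~(\ref{Sact}) contributes in the $DIV^{2n+1}$ case either.

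The two substantive generators are $f=x$ and $f=\xi_1\xi_2$. Using $X_x = x\partial_x + \frac{1}{2}(\xi_1\overline{D}_1+\xi_2\overline{D}_2)$ and $f'=1$, the commutator with $DIV^{2n+1}$ produces Euler shifts $[x\partial_x,\partial_x^n] = -n\partial_x^n$ and constant weight shifts $\lambda$ versus $\mu-\lambda-k$; matching the resulting two rows forces exactly the ratio $\frac{k+2\lambda}{k}$, and the analogous bookkeeping for $DIV^{2n}$ fixes the diagonal prefactor $\frac{(k+2\lambda)(k-n)}{k(k-n+2\lambda)}$. For $f=\xi_1\xi_2$ one has $f'=0$ but $\overline{D}_1\overline{D}_2(\xi_1\xi_2) = -1$, so on the half-integer target the twist of~(\ref{Sact}) actively mixes the two components of $DIV^{2n+1}$, and the asymmetric occurrence of $\frac{k+2\lambda}{k}$ together with the sign $(-1)^{p(F)+1}$ is exactly what absorbs this mixing.

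The step I expect to be the main obstacle is the $f=\xi_1\xi_2$ check for $DIV^{2n}$: although the target action is untwisted, the non-zero value of $\overline{D}_1\overline{D}_2(\xi_1\xi_2)$ still interacts with the explicit $\partial_x^{n-1}\overline{D}_1\overline{D}_2$ appearing inside~(\ref{DivergenceEven2}) and with the cross term $\xi_2\overline{D}_1-\xi_1\overline{D}_2$ in $X_{\xi_1\xi_2}$, producing several terms of comparable order that must cancel simultaneously. In practice I would abbreviate the three scalar prefactors of~(\ref{DivergenceEven2}) by $A,B,C$, read off one linear relation among them from each cancelled monomial, and verify that the displayed values satisfy this small system---this both proves the lemma and explains the appearance of the denominators $k\bigl(2(\mu-\lambda)+n-2k\bigr)$ and $k(k-n+2\lambda)$.
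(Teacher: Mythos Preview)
Your plan is correct and is in essence the same computation the paper performs, only spelled out in much greater detail. The paper's own proof is extremely terse: it writes the most general candidate $DIV^{2n+1}$ and $DIV^{2n}$ with undetermined constants $c_1,\dots,c_8$ and then simply declares that imposing $[L_{X_f},DIV]=0$ for $f\in\mathrm{Aff}(2|2)$ yields the stated formulas. Your generator-by-generator verification of the explicit operators (\ref{DivergenceEven1})--(\ref{DivergenceEven2}) is the dual of this ansatz method and involves exactly the same commutator computations.

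Two remarks worth keeping. First, your observation that $DIV^{2n+1}$ lands in a half-integer symbol space, hence must be checked against the twisted action~(\ref{Sact}), is a genuine point that the paper's proof leaves implicit; making it explicit is a real improvement. Second, your anticipated ``main obstacle'' (the $f=\xi_1\xi_2$ check for $DIV^{2n}$) is handled in the paper simply by absorbing it into the phrase ``we easily get'', so your strategy of naming the three prefactors $A,B,C$ and reading off the linear relations is exactly the bookkeeping the paper suppresses. In short: same approach, your version is more informative.
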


\begin{proof}
This is a direct consequence of projectively equivariant symbol calculus.

We are looking for the symbols:
\[
DIV^{2n+1}\left( F_{1},F_{2}\right) =\left( 
\begin{array}{c}
c_{1}\partial _{x}^{n}\overline{D}_{2}\left( F_{2}\right) +c_{2}\partial
_{x}^{n}\overline{D}_{1}\left( F_{1}\right)  \\ 
c_{3}\partial _{x}^{n}\overline{D}_{2}\left( F_{1}\right) +c_{4}\partial
_{x}^{n}\overline{D}_{1}\left( F_{2}\right) 
\end{array}
\right) 
\]
and 
\[
DIV^{2n}\left( F_{1},F_{2}\right) =\left( 
\begin{array}{c}
c_{5}\partial _{x}^{n}\left( F_{1}\right) +c_{6}\partial _{x}^{n-1}\overline{%
D}_{1}\overline{D}_{2}\left( F_{2}\right)  \\ 
c_{7}\partial _{x}^{n}\left( F_{2}\right) +c_{8}\partial _{x}^{n-1}\overline{%
D}_{1}\overline{D}_{2}\left( F_{1}\right) 
\end{array}
\right) 
\]
where $c_{i}(1\leq i\leq 8)$ are arbitrary constants. From the commutation
relation $\left[ L_{X_{f}},DIV\right] $ for $f\in \mathrm{Aff}\left(
2|2\right) $ we easily get the $\mathrm{Aff}\left( 2|2\right) $-equivariance
if Divergence operators .
\end{proof}

\subsubsection{The Divergence operators in case of half-integer contact
order $k+\frac{1}{2}$}

In this case, we also define the Divergence as Affine equivariant
differential operators on the space of symbols $\mathcal{S}_{\mu -\lambda
}\left( S^{1|2}\right) $. In each component $\mathcal{S}_{\mu -\lambda }^{k+%
\frac{1}{2}}\left( S^{1|2}\right) $ we have 
\begin{equation}
DIV^{2n+1}\left( F_{1},F_{2}\right) =\left( -1\right) ^{p\left( F\right) }%
\frac{2\left( \mu -\lambda \right) -\left( 2k+1\right) }{2\left( \mu
-\lambda \right) -2k}\left( 
\begin{array}{c}
\partial _{x}^{n}\overline{D}_{2}\left( F_{2}\right) +\partial _{x}^{n}%
\overline{D}_{1}\left( F_{1}\right) \\ 
\frac{k-n}{k-n+2\lambda }\left( \partial _{x}^{n}\overline{D}_{2}\left(
F_{1}\right) -\partial _{x}^{n}\overline{D}_{1}\left( F_{2}\right) \right)
\end{array}
\right) ,  \label{DivergenceOdd1}
\end{equation}
\begin{equation}
DIV^{2n}\left( F_{1},F_{2}\right) =\frac{2\left( \mu -\lambda \right)
-\left( 2k+1\right) }{2\left( \mu -\lambda \right) -2k}\left( 
\begin{array}{c}
\frac{2\left( \mu -\lambda \right) +n-2k}{2\left( \mu -\lambda \right)
+n-\left( 2k+1\right) }\partial _{x}^{n}\left( F_{1}\right) -\frac{n}{%
2\left( \mu -\lambda \right) +n-\left( 2k+1\right) }\partial _{x}^{n-1}%
\overline{D}_{1}\overline{D}_{2}\left( F_{2}\right) \\ 
\frac{2\left( \mu -\lambda \right) +n-2k}{2\left( \mu -\lambda \right)
+n-\left( 2k+1\right) }\partial _{x}^{n}\left( F_{2}\right) +\frac{n}{%
2\left( \mu -\lambda \right) +n-\left( 2k+1\right) }\partial _{x}^{n-1}%
\overline{D}_{1}\overline{D}_{2}\left( F_{1}\right)
\end{array}
\right) ,  \label{DivergenceOdd2}
\end{equation}
and

\begin{eqnarray*}
div^{2k+1-\left( 2n+1\right) } &=&\left( 
\begin{array}{cc}
\partial _{x}^{k-n}, & \partial _{x}^{k-n-1}\overline{D}_{1}\overline{D}_{2}
\end{array}
\right) , \\
div^{2k+1-\left( 2n\right) } &=&\left( 
\begin{array}{cc}
\partial _{x}^{k-n}\overline{D}_{1}, & \partial _{x}^{k-n}\overline{D}_{2}
\end{array}
\right) .
\end{eqnarray*}

\begin{lem}
The Divergence operators (\ref{DivergenceOdd1}) and (\ref{DivergenceOdd2})
commute with the action of Affine Lie superalgebra.
\end{lem}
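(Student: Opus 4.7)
The strategy is identical to that of the previous lemma: write down the most general ansatz built from $\partial_x^n$, $\partial_x^n \overline{D}_i$, and $\partial_x^{n-1}\overline{D}_1\overline{D}_2$ with undetermined constants $c_1,\ldots,c_8$ matching the bidegree of $DIV^{2n+1}$ and $DIV^{2n}$, and then impose the commutation relation $[L_{X_f}, DIV^m] = 0$ for every generator $f$ of $\mathrm{Aff}(2|2)$. The affine algebra is spanned by contact Hamiltonians $\{1, \xi_1, \xi_2, x, \xi_1\xi_2\}$. The cases $f=1$ and $f=x$ correspond to the translation and Euler fields; they are automatically satisfied by any homogeneous ansatz and impose no constraint on the $c_i$.

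For $f = \xi_1$ and $f = \xi_2$ one has $\overline{D}_1\overline{D}_2(f) = 0$, so by formula (\ref{Sact}) the $\mathcal{K}(2)$-action on both the source $\mathcal{S}_{\mu-\lambda}^{k+\frac{1}{2}}$ and the target of $DIV^m$ acts diagonally on the pair $(F_1,F_2)$ through densities of the appropriate weight. The commutator $[L_{X_f}, DIV^m]$ then reduces to a direct graded-Leibniz computation that yields linear relations among the $c_i$ in each row separately. The delicate case is $f = \xi_1\xi_2$: here $\overline{D}_1\overline{D}_2(\xi_1\xi_2)$ is a nonzero constant, so (\ref{Sact}) mixes the two components via an antisymmetric off-diagonal term. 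One must therefore track the mixing contribution on the source side for $DIV^{2n+1}$, whose target lies in the integer stratum and is acted on without mixing as in Proposition \ref{SymP}, and on both source and target sides for $DIV^{2n}$, where both strata are half-integer.

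Requiring the mixing and non-mixing pieces of the commutator to cancel separately produces a coupled linear system in $c_1,\ldots,c_8$ whose solution, unique up to a global scalar normalization, reproduces the rational coefficients such as $\frac{k-n}{k-n+2\lambda}$ and $\frac{2(\mu-\lambda)+n-2k}{2(\mu-\lambda)+n-(2k+1)}$ appearing in (\ref{DivergenceOdd1}) and (\ref{DivergenceOdd2}). The main obstacle is the bookkeeping in the $f = \xi_1\xi_2$ computation: the mixing term in (\ref{Sact}) forces equations relating constants belonging to different rows of the ansatz, and the stated rational factors are precisely the values that simultaneously cancel every mixing contribution while remaining compatible with the non-mixing relations from the $f = \xi_1, \xi_2$ cases. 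No new conceptual ingredient beyond the graded Leibniz rule for $\overline{D}_1, \overline{D}_2$ and the explicit form of $L_{X_f}^{\nu}$ is required.
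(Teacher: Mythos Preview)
Your proposal is correct and follows the same approach as the paper, which simply records ``Straightforward calculus'' for this lemma; your write-up spells out exactly the computation the paper leaves implicit, mirroring the ansatz-plus-commutator method used in the preceding lemma for the integer case.
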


\begin{proof}
Straightforward calculus.
\end{proof}

\subsection{Staitement of the main result}

Let us give the explicit formula of the projectively equivariant
quantization map. We will give the proof in the next section.

\begin{thm}
The unique $\mathrm{osp}\left( 2|2\right) $-equivariant quantization map
associates the following differential operator with a symbol $\left(
F_{1},F_{2}\right) \in \mathcal{S}_{\mu -\lambda }^{k}\left( S^{1|2}\right) $
where $k$\ is (even or odd) integer : 
\begin{equation}
Q\left( F_{1},F_{2}\right) =\sum_{n=0}^{k}\frac{\left( 
\begin{array}{c}
\left[ \frac{k}{2}\right] \\ 
\left[ \frac{2n+1+\left( -1\right) ^{k}}{4}\right]
\end{array}
\right) \left( 
\begin{array}{c}
\left[ \frac{k-1}{2}\right] +2\lambda \\ 
\left[ \frac{2n+1-\left( -1\right) ^{k}}{4}\right]
\end{array}
\right) }{\left( 
\begin{array}{c}
k-2\left( \mu -\lambda \right) \\ 
\left[ \frac{n+1}{2}\right]
\end{array}
\right) }DIV^{n}\left( F_{1},F_{2}\right) div^{k-n}
\label{StringQuantization}
\end{equation}
provided $\mu -\lambda \neq 0,\frac{1}{2},1,\frac{3}{2},2...$where $DIV$ and 
$div$ are defined in each particular case of even or odd contact order : (%
\ref{DivergenceEven1}), (\ref{DivergenceEven2}), (\ref{DivergenceOdd1}) and (%
\ref{DivergenceOdd2}) and the coefficients are $\left( 
\begin{array}{c}
n \\ 
m
\end{array}
\right) =\frac{n\left( n-1\right) ..\left( n-m+1\right) }{m!}.$
\end{thm}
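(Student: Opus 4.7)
The plan is to exploit the Affine-equivariant building blocks supplied by the two Divergence lemmas and then pin down the coefficients by imposing equivariance under the three generators of $\mathrm{osp}(2|2)$ that lie outside $\mathrm{Aff}(2|2)$, namely those associated with the quadratic contact Hamiltonians $x\xi_1$, $x\xi_2$, $x^2$.

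First I would make the ansatz
\[
Q(F_1,F_2) \;=\; \sum_{n=0}^{k} c_n\, DIV^{n}(F_1,F_2)\, div^{k-n},
\]
with undetermined scalars $c_n = c_n(k,\lambda,\mu)$ and normalization $c_0 = 1$ forced by preservation of the principal symbol. Because each $DIV^{n}$ is $\mathrm{Aff}(2|2)$-equivariant by the preceding lemmas, and $div^{k-n}$ is built from the $\mathrm{Aff}(2|2)$-invariant constant-coefficient operators $\partial_x$ and $\overline{D}_i$, this ansatz commutes with the whole $\mathrm{Aff}(2|2)$-action. Moreover, since $[X_{x^2}, X_{\xi_i}]$ is proportional to $X_{x\xi_i}$ inside $\mathrm{osp}(2|2)$, it suffices to impose equivariance under the single even generator $X_{x^2}$; the two odd conditions then follow from the Jacobi identity together with the Affine equivariance already in hand.

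Next I would expand $[\pounds^{\lambda\mu}_{X_{x^2}},\, DIV^{n}\cdot div^{k-n}]$ acting on a homogeneous symbol. The explicit action (\ref{Sact}) on symbols of half-integer contact order, together with the elementary commutators of $\pounds^{\lambda\mu}_{X_{x^2}}$ with the monomials $\partial_x^j$, $\partial_x^j\overline{D}_i$ and $\partial_x^j\overline{D}_1\overline{D}_2$, show that this commutator drops the contact order by $\frac{1}{2}$ and is expressible as a linear combination of $DIV^{n+1}\cdot div^{k-n-1}$ terms. Hence the equivariance condition collapses to a two-term recursion
\[
c_{n+1} \;=\; R_k(n;\lambda,\mu)\, c_n,
\]
where $R_k$ is an explicit rational function whose numerator carries the factors $k+2\lambda$, $k-n+2\lambda$ and $k-n$ already visible in (\ref{DivergenceEven1})--(\ref{DivergenceOdd2}), and whose denominator carries expressions of the form $2(\mu-\lambda)+n-2k$. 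Telescoping the sub-recursions obtained by stepping $n \mapsto n+2$ (which keeps parity fixed, since $DIV$ has two different shapes according to the parity of its superscript) produces products of falling factorials that reassemble into exactly the quotient of binomial coefficients in (\ref{StringQuantization}); the floor functions $[\frac{k}{2}]$, $[\frac{k-1}{2}]$ and the $(-1)^k$ flips then encode, in a uniform way, which of the two parity sub-recursions is being summed and whether $k$ itself is even or odd.

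Uniqueness of $Q$ is automatic, since the recursion determines every $c_n$ from $c_0 = 1$; existence requires only that the denominators of $R_k$ do not vanish, and these are precisely the factors assembled in the denominator of (\ref{StringQuantization}), which are non-zero exactly under the non-resonance hypothesis $\mu-\lambda \notin \{0, \frac{1}{2}, 1, \frac{3}{2}, 2,\ldots\}$. The main technical obstacle, in my view, is the bookkeeping of the four sub-cases obtained by crossing the parity of $k$ with the parity of $n$: the $\mathcal{K}(2)$-action (\ref{Sact}) mixes the two components $F_1$ and $F_2$ through the $\overline{D}_1\overline{D}_2(f)$ term, so one must verify that the recursions produced by matching the coefficient of $F_1$ and the coefficient of $F_2$ are simultaneously solvable. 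The compatibility of these two equations is what forces the asymmetric ratios such as $\frac{k+2\lambda}{k}$ in (\ref{DivergenceEven1}) and $\frac{k-n}{k-n+2\lambda}$ in (\ref{DivergenceOdd1}), and this rigidity fixes both the precise form of the Divergence operators and the exact binomial combination appearing in (\ref{StringQuantization}).
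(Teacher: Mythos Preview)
Your overall strategy matches the paper's: reduce first to $\mathrm{Aff}(2|2)$-equivariance, then impose only the $X_{x^{2}}$-condition (using $[X_{\xi_i},X_{x^2}]\propto X_{x\xi_i}$). The difference is organizational. The paper does \emph{not} start from the one-parameter ansatz $\sum_n c_n\,DIV^{n}\cdot div^{k-n}$; it begins with the most general symbol-preserving map, shows that $\mathrm{Aff}(2|2)$-equivariance forces each level to carry \emph{four} free constants $C_{m,1},\dots,C_{m,4}$, and then writes down the eight coupled recursions produced by $X_{x^{2}}$. Solving that $8$-equation system is precisely what yields both the binomial coefficients and the specific internal ratios (like $\tfrac{k+2\lambda}{k}$) that you are taking for granted inside $DIV^{n}$.

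This creates two gaps in your write-up. First, uniqueness: your sentence ``uniqueness of $Q$ is automatic, since the recursion determines every $c_n$ from $c_0=1$'' only proves uniqueness \emph{within your ansatz class}. The Divergence lemmas state that the specific operators (\ref{DivergenceEven1})--(\ref{DivergenceOdd2}) are $\mathrm{Aff}(2|2)$-equivariant; they do not assert that every $\mathrm{Aff}(2|2)$-equivariant map $\mathcal{S}^k\to\mathcal{D}^k$ is of the form $\sum c_n\,DIV^{n}\cdot div^{k-n}$, and in fact the paper's own computation shows the affine-equivariant family is four-dimensional at each level, not one-dimensional. Second, your claim that $[\pounds^{\lambda\mu}_{X_{x^2}},\,DIV^{n}\cdot div^{k-n}]$ is a scalar multiple of $DIV^{n+1}\cdot div^{k-n-1}$ --- i.e.\ that the defect stays inside your one-parameter family and yields a clean two-term recursion --- is exactly the nontrivial compatibility that the paper establishes by solving its $8\times 8$ linear system. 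You identify this correctly as ``the main technical obstacle'' at the end, but you do not actually carry it out; without that computation the existence argument is also incomplete.
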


\begin{rmk}
This theorem keeps being achieved in the case of Dimension $1|1$, see \cite
{GMO}; the divergence operators $DIV$ and $div$ are given by $\overline{D}.$
\end{rmk}

\subsection{\noindent Proof of theorem in case of $k$-differential operators}

\begin{proof}
Let us first consider the case of $k$-differential operators where $k$ is
integer. The quantization map \ref{StringQuantization} is, indeed, $\mathrm{%
osp}\left( 2|2\right) $-equivariant. Now, we are considering a
differentiable linear map $Q:\mathcal{S}_{\mu -\lambda }^{k}\left(
S^{1|2}\right) \rightarrow \mathcal{D}_{\lambda \mu }^{k}\left(
S^{1|2}\right) $ for $k\geq 1$, preserving the principal symbol. Such a map
is of the form :

\begin{eqnarray*}
Q\left( F_{1},F_{2}\right) &=&F_{1}\partial _{x}^{k}+F_{2}\partial _{x}^{k-1}%
\overline{D}_{1}\overline{D}_{2}+... \\
&&+\widetilde{Q}_{1}^{\left( 2\ell \right) }\left( F_{1}\right) +\widetilde{Q%
}_{1}^{\left( 2\ell +1\right) }\left( F_{1}\right) \\
&&+\widetilde{Q}_{2}^{\left( 2\ell \right) }\left( F_{2}\right) +\widetilde{Q%
}_{2}^{\left( 2\ell +1\right) }\left( F_{2}\right) \\
&&..+\left( C_{2k,1}\partial _{x}^{k}\left( F_{1}\right) +C_{2k,2}\partial
_{x}^{k-1}\overline{D}_{1}\overline{D}_{2}\left( F_{2}\right) \right)
\end{eqnarray*}
where $\widetilde{Q}_{1}^{\left( m\right) }$ and $\widetilde{Q}_{2}^{\left(
m\right) }$ are differential operators with coefficients in $\mathcal{F}%
_{\mu -\lambda }\left( S^{1|2}\right) $, see (\ref{GeneralformulaOpDiff}).

We obtain the following :

a) This map commutes with the action of the vector fields $D_{1},D_{2}\in 
\mathrm{osp}\left( 2|2\right) $, where $D_{i}=\partial _{\xi _{i}}+\xi
_{i}\partial _{x}$, if and only if the differential operators $\widetilde{Q}%
_{1}^{\left( m\right) }$ and $\widetilde{Q}_{2}^{\left( m\right) }$ are with
constant coefficients.

b) This map commutes with the linear vector fields $X_{\xi _{1}},X_{\xi
_{2}},X_{x}$ if and only if the differential operators $\widetilde{Q}%
_{1}^{\left( m\right) }$ and $\widetilde{Q}_{2}^{\left( m\right) }$ are of
contact order $\frac{m}{2}$ in addition to the form 
\[
\left\{ 
\begin{array}{rcl}
\widetilde{Q}_{1}^{\left( 2\ell +1\right) }\left( F_{1}\right)  & = & 
C_{2\ell +1,1}\partial _{x}^{\ell }\overline{D}_{1}\left( F_{1}\right)
\partial _{x}^{k-\ell -1}\overline{D}_{1}+C_{2\ell +1,3}\partial _{x}^{\ell }%
\overline{D}_{2}\left( F_{1}\right) \partial _{x}^{k-\ell -1}\overline{D}_{2}
\\[4pt]
\widetilde{Q}_{1}^{\left( 2\ell \right) }\left( F_{1}\right)  & = & C_{2\ell
,1}\partial _{x}^{\ell }\left( F_{1}\right) \partial _{x}^{k-\ell }+C_{2\ell
,3}\partial _{x}^{\ell -1}\overline{D}_{1}\overline{D}_{2}\left(
F_{1}\right) \partial _{x}^{k-\ell -1}\overline{D}_{1}\overline{D}_{2} \\%
[4pt]
\widetilde{Q}_{2}^{\left( 2\ell +1\right) }\left( F_{2}\right)  & = & 
C_{2\ell +1,2}\partial _{x}^{\ell }\overline{D}_{2}\left( F_{2}\right)
\partial _{x}^{k-\ell -1}\overline{D}_{1}+C_{2\ell +1,4}\partial _{x}^{\ell }%
\overline{D}_{1}\left( F_{2}\right) \partial _{x}^{k-\ell -1}\overline{D}_{2}
\\[4pt]
\widetilde{Q}_{2}^{\left( 2\ell \right) }\left( F_{2}\right)  & = & C_{2\ell
,2}\partial _{x}^{\ell -1}\overline{D}_{1}\overline{D}_{2}\left(
F_{2}\right) \partial _{x}^{k-\ell }+C_{2\ell ,4}\partial _{x}^{\ell }\left(
F_{2}\right) \partial _{x}^{k-\ell -1}\overline{D}_{1}\overline{D}_{2}
\end{array}
\right. 
\]
where the coefficients $C_{m,i}(i=1,2,3,4)$ are arbitrary constants.

Note that the vector field $X_{x\xi _{i}}$ is the commutation relation $%
\left[ X_{\xi _{i}},X_{x^{2}}\right] ,i=1,2,$ so it is sufficient to impose
the equivariance with respect to the vector field $X_{x^{2}}$ to meet the
whole condition of $\mathrm{osp}\left( 2|2\right) $-equivariance.

d) The above quantization map commutes with the action of $X_{x^{2}}$ if and
only if any the coefficients $C_{m,i}$ verify the following conditions :

\[
\left\{ 
\begin{array}{rcl}
\ell \left( \ell -1+2\left( \mu -\lambda -k\right) \right) C_{2\ell ,1} & =
& -\left( k-\ell +1\right) \left( k-\ell +2\lambda \right) C_{2\ell -2,1} \\%
[4pt] 
\left( \ell +1\right) \left( \ell +2\left( \mu -\lambda -k\right) \right)
C_{2\ell ,2} & = & \left( k-\ell +2\lambda \right) \left( 
\begin{array}{c}
\left( -1\right) ^{p\left( F\right) }\left( C_{2\ell -1,2}-C_{2\ell
-1,4}\right) \\ 
-\left( k-\ell +1\right) C_{2\ell -2,2}
\end{array}
\right) \\[4pt] 
\left( \ell +1\right) \left( \ell +2\left( \mu -\lambda -k\right) \right)
C_{2\ell ,3} & = & -\left( k-\ell \right) \left( 
\begin{array}{c}
\left( -1\right) ^{p\left( F\right) }\left( C_{2\ell -1,1}+C_{2\ell
-1,3}\right) \\ 
+\left( k-\ell +2\lambda +1\right) C_{2\ell -2,3}
\end{array}
\right) \\[4pt] 
\ell \left( \ell -1+2\left( \mu -\lambda -k\right) \right) C_{2\ell ,4} & =
& -\left( k-\ell \right) \left( k-\ell +2\lambda +1\right) C_{2\ell -2,4} \\%
[4pt] 
\left( \ell +1\right) \left( \ell +2\left( \mu -\lambda -k\right) \right)
C_{2\ell +1,1} & = & \left( k-\ell \right) \left( \left( -1\right) ^{p\left(
F\right) }C_{2\ell ,1}-\left( k-\ell +2\lambda \right) C_{2\ell -1,1}\right)
\\[4pt] 
\left( \ell +1\right) \left( \ell +2\left( \mu -\lambda -k\right) \right)
C_{2\ell +1,2} & = & \left( k-\ell +2\lambda \right) \left( \left( -1\right)
^{p\left( F\right) }C_{2\ell ,4}-\left( k-\ell \right) C_{2\ell -1,2}\right)
\\[4pt] 
\left( \ell +1\right) \left( \ell +2\left( \mu -\lambda -k\right) \right)
C_{2\ell +1,3} & = & \left( k-\ell \right) \left( \left( -1\right) ^{p\left(
F\right) }C_{2\ell ,1}-\left( k-\ell +2\lambda \right) C_{2\ell -1,3}\right)
\\[4pt] 
\left( \ell +1\right) \left( \ell +2\left( \mu -\lambda -k\right) \right)
C_{2\ell +1,4} & = & -\left( k-\ell +2\lambda \right) \left( \left(
-1\right) ^{p\left( F\right) }C_{2\ell ,4}+\left( k-\ell \right) C_{2\ell
-1,4}\right)
\end{array}
\right. 
\]

If $\mu -\lambda \neq 0,\frac{1}{2},1,\frac{3}{2},2...$, this system has
been solved and the solutions are the following:

\[
\left\{ 
\begin{array}{c}
C_{2\ell ,2}=\frac{\left( 
\begin{array}{c}
k-1 \\ 
\ell -1
\end{array}
\right) \left( 
\begin{array}{c}
k+2\lambda \\ 
\ell +1
\end{array}
\right) }{\left( 
\begin{array}{c}
-2\left( \mu -\lambda -k\right) \\ 
\ell +1
\end{array}
\right) } \\ 
C_{2\ell ,3}=-\frac{\left( 
\begin{array}{c}
k \\ 
\ell +1
\end{array}
\right) \left( 
\begin{array}{c}
k+2\lambda -1 \\ 
\ell -1
\end{array}
\right) }{\left( 
\begin{array}{c}
-2\left( \mu -\lambda -k\right) \\ 
\ell +1
\end{array}
\right) }
\end{array}
\right. and\left\{ 
\begin{array}{c}
C_{2\ell ,1}=\frac{\left( k+2\lambda -\ell \right) \left( 2\left( \mu
-\lambda -k\right) +\ell \right) }{\ell \left( k-\ell \right) }C_{2\ell ,3},
\\ 
C_{2\ell ,4}=-\frac{\left( k-\ell \right) \left( 2\left( \mu -\lambda
-k\right) +\ell \right) }{\ell \left( k-\ell +2\lambda \right) }C_{2l,2}, \\ 
C_{2\ell +1,1}=\left( -1\right) ^{p\left( F\right) }\frac{\left( k+2\lambda
-\ell \right) }{\ell }C_{2\ell ,3}, \\ 
C_{2\ell +1,2}=-\left( -1\right) ^{p\left( F\right) }\frac{\left( k-\ell
\right) }{\ell }C_{2\ell ,2}, \\ 
C_{2\ell +1,3}=\left( -1\right) ^{p\left( F\right) }\frac{\left( k+2\lambda
-\ell \right) }{\ell }C_{2\ell ,3} \\ 
C_{2\ell +1,4}=\left( -1\right) ^{p\left( F\right) }\frac{\left( k-\ell
\right) }{\ell }C_{2\ell ,2}
\end{array}
\right. . 
\]
That allows us to obtain the formula (\ref{StringQuantization}). 
\end{proof}

\subsection{Proof of theorem in case of $\left( k+\frac{1}{2}\right) $%
-differential operators}

\begin{proof}

In the case of $\left( k+\frac{1}{2}\right) $-differential operators where $%
k $ is integer, we get an $\mathrm{Aff}\left( 2|2\right) $-equivariant
quantization map by a straightforward calculation which is given by

\begin{eqnarray*}
Q\left( F_{1},F_{2}\right) &=&F_{1}\partial _{x}^{k}\overline{D}%
_{1}+F_{2}\partial _{x}^{k}\overline{D}_{2}+... \\
&&+\widetilde{Q}_{1}^{\left( 2\ell \right) }\left( F_{1}\right) +\widetilde{Q%
}_{1}^{\left( 2\ell +1\right) }\left( F_{1}\right) \\
&&+\widetilde{Q}_{2}^{\left( 2\ell \right) }\left( F_{2}\right) +\widetilde{Q%
}_{2}^{\left( 2\ell +1\right) }\left( F_{2}\right) \\
&&..+\left( C_{2k+1,1}\partial _{x}^{k}\overline{D}_{1}\left( F_{1}\right)
+C_{2k+1,2}\partial _{x}^{k}\overline{D}_{2}\left( F_{2}\right) \right)
\end{eqnarray*}
where the $\frac{m}{2}$-differential operators $\widetilde{Q}_{1}^{\left(
m\right) }$ and $\widetilde{Q}_{2}^{\left( m\right) }$ have the form :

\[
\left\{ 
\begin{array}{rcl}
\widetilde{Q}_{1}^{\left( 2\ell \right) }\left( F_{1}\right) & = & C_{2\ell
,1}\partial _{x}^{\ell }\left( F_{1}\right) \partial _{x}^{k-\ell }\overline{%
D}_{1}+C_{2\ell ,3}\partial _{x}^{\ell -1}\overline{D}_{1}\overline{D}%
_{2}\left( F_{1}\right) \partial _{x}^{k-\ell }\overline{D}_{2} \\[4pt] 
\widetilde{Q}_{1}^{\left( 2\ell +1\right) }\left( F_{1}\right) & = & 
C_{2\ell +1,1}\partial _{x}^{\ell }\overline{D}_{1}\left( F_{1}\right)
\partial _{x}^{k-\ell }+C_{2\ell +1,3}\partial _{x}^{\ell }\overline{D}%
_{2}\left( F_{1}\right) \partial _{x}^{k-\ell -1}\overline{D}_{1}\overline{D}%
_{2} \\[4pt] 
\widetilde{Q}_{2}^{\left( 2\ell \right) }\left( F_{2}\right) & = & C_{2\ell
,2}\partial _{x}^{\ell -1}\overline{D}_{1}\overline{D}_{2}\left(
F_{2}\right) \partial _{x}^{k-\ell }\overline{D}_{1}+C_{2\ell ,4}\partial
_{x}^{\ell }\left( F_{2}\right) \partial _{x}^{k-\ell }\overline{D}_{2} \\%
[4pt] 
\widetilde{Q}_{2}^{\left( 2\ell +1\right) }\left( F_{2}\right) & = & 
C_{2\ell +1,2}\partial _{x}^{\ell }\overline{D}_{2}\left( F_{2}\right)
\partial _{x}^{k-\ell }+C_{2\ell +1,4}\partial _{x}^{\ell }\overline{D}%
_{1}\left( F_{2}\right) \partial _{x}^{k-\ell -1}\overline{D}_{1}\overline{D}%
_{2}
\end{array}
\right. 
\]
The above quantization map commutes with the action of $X_{x^{2}}$ if and
only if the coefficients $C_{m,j}(j=1,2,3,4)$\ verify the following system
of linear equations :

\[
\left\{ 
\begin{array}{rcl}
\left( 
\begin{array}{c}
C_{2\ell ,1} \\ 
-\left( \ell +1\right) \left( \ell +2\left( 
\begin{array}{c}
\mu -\lambda \\ 
-k-\frac{1}{2}
\end{array}
\right) \right) C_{2\ell ,2}
\end{array}
\right) & = & \left( 
\begin{array}{c}
\left( k-\ell +1\right) \left( k-\ell +2\lambda +1\right) C_{2\ell -2,2} \\ 
+\left( -1\right) ^{p\left( F\right) }\left( k-\ell +1\right) C_{2\ell -1,2}
\\ 
+C_{2\ell ,4} \\ 
-\left( -1\right) ^{p\left( F\right) }\left( k-\ell +2\lambda +1\right)
C_{2\ell -1,4}
\end{array}
\right) \\[4pt] 
\left( 
\begin{array}{c}
\left( \ell +1\right) \left( \ell +2\left( 
\begin{array}{c}
\mu -\lambda \\ 
-k-\frac{1}{2}
\end{array}
\right) \right) C_{2\ell ,3} \\ 
+C_{2\ell ,4}
\end{array}
\right) & = & \left( 
\begin{array}{c}
C_{2l,1} \\ 
+\left( -1\right) ^{p\left( F\right) }\left( k-\ell +1\right) C_{2\ell -1,1}
\\ 
-\left( k-\ell +1\right) \left( k-\ell +2\lambda +1\right) C_{2\ell -2,3} \\ 
+\left( -1\right) ^{p\left( F\right) }\left( k-\ell +2\lambda +1\right)
C_{2\ell -1,3}
\end{array}
\right) \\[4pt] 
\left( 
\begin{array}{c}
\ell \left( \ell -1+2\left( \mu -\lambda -k-\frac{1}{2}\right) \right)
C_{2\ell ,1} \\ 
-C_{2\ell ,2}
\end{array}
\right) & = & -\left( k-\ell +1\right) \left( k-\ell +2\lambda +1\right)
C_{2\ell -2,1} \\[4pt] 
\left( 
\begin{array}{c}
C_{2\ell ,3} \\ 
+\ell \left( \ell -1+2\left( \mu -\lambda -k-\frac{1}{2}\right) \right)
C_{2\ell ,4}
\end{array}
\right) & = & -\left( k-\ell +1\right) \left( k-\ell +2\lambda +1\right)
C_{2\ell -2,4} \\[4pt] 
\left( 
\begin{array}{c}
C_{2\ell +1,1} \\ 
+\left( \ell +1\right) \left( \ell +2\left( 
\begin{array}{c}
\mu -\lambda \\ 
-k-\frac{1}{2}
\end{array}
\right) \right) C_{2\ell +1,2}
\end{array}
\right) & = & -\left( k-\ell +2\lambda \right) \left( 
\begin{array}{c}
\left( -1\right) ^{p\left( F\right) }C_{2\ell ,4} \\ 
+\left( k-\ell +1\right) C_{2\ell -1,2}
\end{array}
\right) \\[4pt] 
\left( 
\begin{array}{c}
C_{2\ell +1,2} \\ 
+\left( \ell +1\right) \left( \ell +2\left( 
\begin{array}{c}
\mu -\lambda \\ 
-k-\frac{1}{2}
\end{array}
\right) \right) C_{2\ell +1,1}
\end{array}
\right) & = & -\left( k-\ell +2\lambda \right) \left( 
\begin{array}{c}
\left( -1\right) ^{p\left( F\right) }C_{2\ell ,1} \\ 
+\left( k-\ell +1\right) C_{2\ell -1,1}
\end{array}
\right) \\[4pt] 
\left( 
\begin{array}{c}
C_{2\ell +1,3} \\ 
-\left( \ell +1\right) \left( \ell +2\left( 
\begin{array}{c}
\mu -\lambda \\ 
-k-\frac{1}{2}
\end{array}
\right) \right) C_{2\ell +1,4}
\end{array}
\right) & = & \left( k-\ell \right) \left( 
\begin{array}{c}
\left( k-\ell +2\lambda +1\right) C_{2\ell -1,4} \\ 
-\left( -1\right) ^{p\left( F\right) }C_{2\ell ,4}
\end{array}
\right) \\[4pt] 
\left( 
\begin{array}{c}
C_{2\ell +1,4} \\ 
-\left( \ell +1\right) \left( \ell +2\left( 
\begin{array}{c}
\mu -\lambda \\ 
-k-\frac{1}{2}
\end{array}
\right) \right) C_{2\ell +1,3}
\end{array}
\right) & = & \left( k-\ell \right) \left( 
\begin{array}{c}
\left( -1\right) ^{p\left( F\right) }C_{2\ell ,1} \\ 
+\left( k-\ell +2\lambda +1\right) C_{2\ell -1,3}
\end{array}
\right)
\end{array}
\right. 
\]
By solving this system, we obtain the formula (\ref{StringQuantization}). 
\end{proof}

\textbf{Acknowledgements}. I am very grateful to C.Duval, H. Gargoubi and
particularly to my supervisor V. Ovsienko for providing me with the problem
and the constant help. I am also thankful to D. Leites for his critical
reading of this paper and his helpful suggestions.

\end{document}